    \numberwithin{equation}{section}
    \numberwithin{theorem}{section}
    \spnewtheorem{cor}[theorem]{Corollary}{\bfseries}{\itshape}
    \spnewtheorem{lem}[theorem]{Lemma}{\bfseries}{\itshape}
    \spnewtheorem{prop}[theorem]{Proposition}{\bfseries}{\itshape}
    \spnewtheorem{obs}[theorem]{Observation}{\bfseries}{\itshape}
    \renewcommand{\vec}[1]{}
\newcommand{\qedl}{\opt{normal}{}\opt{submission,final}{\qed}}
\newcommand{\dN}{\mathsf{N}}
\newcommand{\dS}{\mathsf{S}}
\newcommand{\dE}{\mathsf{E}}
\newcommand{\dW}{\mathsf{W}}
\newcommand{\dall}{\{\dN,\dS,\dE,\dW\}}
\newcommand{\ignore}[1]{}
\newcommand{\ha}{\widehat{\alpha}}
\begin{document}

\opt{normal}{
    \title{Program Size and Temperature in Self-Assembly\thanks{The first author was supported by the Molecular Programming Project under NSF grant 0832824, the second author was supported by an NSF Computing Innovation Fellowship, and the third author was supported by NSERC Discovery Grant R2824A01 and the Canada Research Chair in Biocomputing to Lila Kari.}}
    \date{}

    \author{
        Ho-Lin Chen\thanks{California Institute of Technology, Department of Computing and Mathematical Sciences, Pasadena, CA, USA, {\tt holinc@gmail.com}, {\tt ddoty@caltech.edu}}
        \and
        David Doty\footnotemark[2]
        \and
        Shinnosuke Seki\thanks{University of Western Ontario, Department of Computer Science, London, ON, Canada, N6A 5B7,  {\tt sseki@csd.uwo.ca}.}
    }
}

\opt{submission}{
    \title{Program Size and Temperature in Self-Assembly\thanks{The first author was supported by the Molecular Programming Project under NSF grant 0832824, the second author was supported by an NSF Computing Innovation Fellowship, and the third author was supported by NSERC Discovery Grant R2824A01 and the Canada Research Chair in Biocomputing to Lila Kari.}}
    \date{}

    \author{
        Ho-Lin Chen\inst{1}
        \and
        David Doty\inst{1}
        \and
        Shinnosuke Seki\inst{2}
    }

    \institute{California Institute of Technology, Department of Computing and Mathematical Sciences, Pasadena, CA, USA, \email{holinc@gmail.com}, \email{ddoty@caltech.edu},
    \and
    University of Western Ontario, Department of Computer Science, London, ON, Canada, \email{sseki@csd.uwo.ca}}
}

\maketitle


\begin{abstract}

Winfree's abstract Tile Assembly Model (aTAM) is a model of molecular self-assembly of DNA complexes known as tiles, which float freely in solution and attach one at a time to a growing ``seed'' assembly based on specific binding sites on their four sides.
We show that there is a polynomial-time algorithm that, given an $n \times n$ square, finds the minimal tile system (i.e., the system with the smallest number of distinct tile types) that uniquely self-assembles the square, answering an open question of Adleman, Cheng, Goel, Huang, Kempe, Moisset de Espan\'{e}s, and Rothemund (\emph{Combinatorial Optimization Problems in Self-Assembly}, STOC 2002).
Our investigation leading to this algorithm reveals other positive and negative results about the relationship between the size of a tile system and its ``temperature'' (the binding strength threshold required for a tile to attach).
\end{abstract}







\opt{submission,final}{\thispagestyle{empty}\newpage\setcounter{page}{1} \clearpage}
\section{Introduction}
\label{sec-intro}

Tile self-assembly is an algorithmically rich model of ``programmable crystal growth''.
It is possible to design monomers (square-like ``tiles'') with specific binding sites so that, even subject to the chaotic nature of molecules floating randomly in a well-mixed chemical soup, they are guaranteed to bind so as to deterministically form a single target shape.
This is despite the number of different types of tiles possibly being much smaller than the size of the shape and therefore having only ``local information'' to guide their attachment.
The ability to control nanoscale structures and machines to atomic-level precision will rely crucially on sophisticated self-assembling systems that automatically control their own behavior where no top-down externally controlled device could fit.

A practical implementation of self-assembling molecular tiles was proved experimentally feasible in 1982 by Seeman~\cite{Seem82} using DNA complexes formed from artificially synthesized strands.
Experimental advances have delivered increasingly reliable assembly of algorithmic DNA tiles with error rates of 10\% per tile in 2004~\cite{RoPaWi04}, 1.4\% in 2007~\cite{FujHarParWinMur07}, 0.13\% in 2009~\cite{BarSchRotWin09}, and 0.05\% in 2010~\cite{Constantine10personal}.
Erik Winfree~\cite{Winf98} introduced the abstract Tile Assembly Model (aTAM) -- based on a constructive version of Wang tiling~\cite{Wang61,Wang63} -- as a simplified mathematical model of self-assembling DNA tiles.
Winfree demonstrated the computational universality of the aTAM by showing how to simulate an arbitrary cellular automaton with a tile assembly system.
Building on these connections to computability, Rothemund and Winfree~\cite{RotWin00} investigated a self-assembly resource bound known as \emph{tile complexity}, the minimum number of tile types needed to assemble a shape.
They showed that for most $n$, the problem of assembling an $n \times n$ square has tile complexity $\Omega(\frac{\log n}{\log \log n})$, and Adleman, Cheng, Goel, and Huang~\cite{AdChGoHu01} exhibited a construction showing that this lower bound is asymptotically tight.
Under natural generalizations of the model~\cite{AGKS05, BeckerRR06, KaoSchS08, KS06, DDFIRSS07, Sum09, Dot10, ChaGopRei09, RNaseSODA2010, DemPatSchSum2010RNase, SolWin07, ManStaSto09ISAAC}, tile complexity can be reduced for the assembly of squares and more general shapes.

The results of this paper are motivated by the following problem posed in 2002.
Adleman, Cheng, Goel, Huang, Kempe, Moisset de Espan\'{e}s, and Rothemund~\cite{ACGHKMR02} showed that there is a polynomial time algorithm for finding a minimum size tile system (i.e., system with the smallest number of distinct tile types) to uniquely self-assemble a given $n \times n$ square,\footnote{The square is encoded by a list of its points, so the algorithm's running time is polynomial in $n$.} subject to the constraint that the tile system's ``temperature'' (binding strength threshold required for a tile to attach) is 2.
They asked whether their algorithm could be modified to remove the temperature 2 constraint.
Our main theorem answers this question affirmatively.
Their algorithm works by brute-force search over the set of all temperature-2 tile systems with at most $O(\frac{\log n}{\log \log n})$ tile types, using the fact proven by Adleman, Cheng, Goel, and Huang~\cite{AdChGoHu01} that such an upper bound on tile complexity suffices to assemble any $n \times n$ square.
A simple counting argument shows that for any constant $\tau$, the number of tile systems with glue strengths and temperature at most $\tau$ and $O(\frac{\log n}{\log \log n})$ tile types is bounded by a polynomial in $n$.
One conceivable approach to extending the algorithm to arbitrary temperature is to prove that for any tile system with $K$ tile types, the strengths and temperature can be re-assigned so that they are upper-bounded by a constant or slow-growing function of $K$, without affecting the behavior of the tile system.\footnote{We define ``behavior'' more formally in
Section~\ref{sec-find-strengths}.
Briefly, we consider a tile system's behavior unaltered by a reassignment of strengths and temperature if, for each tile type $t$, the reassignment has not altered the collection of subsets of sides of $t$ that have sufficient strength to bind.}
However, we show that this approach cannot work, by demonstrating that for each $K$, there is a tile system with $K$ tile types whose behavior cannot be preserved using any temperature less than $2^{K/4}$.
The proof crucially uses \emph{3-cooperative} binding, meaning attachment events that require three different glues of a tile to match the assembly.
On the other hand, we show that any 2-cooperative tile system (which captures a wide and useful class of systems) with $K$ tile types is behaviorally equivalent to a system with temperature at most $2K+2$.

Of course, the choice of integer glue strengths is an artifact of the model.
Nonetheless, our investigation does reflect fundamental questions about how finely divided molecular binding energies must be in a real molecular self-assembly system.
The requirement of integer strengths is simply one way of ``quantizing'' the minimum distinction we are willing to make between energies and then re-scaling so that this quantity is normalized to 1.\footnote{Indeed, our proof does not require that strengths be integer, merely that the distance between the smallest energy strong enough to bind and the largest energy too weak to bind be at least 1.}
Our 3-cooperative lower bound therefore shows that in general, certain self-assembling systems that have very large gaps between some of their binding energies nonetheless require other binding energies to be extremely close (exponentially small in terms of the larger gaps) and yet still unequal.
This can be interpreted as an infeasibility result if one defines ``exponentially fine control'' of binding energies as ``infeasible'' to execute in any real laboratory, since no implementation of the specified tile behavior can use courser energies.

As a converse to the exponential temperature lower bound stated above, we show that there is a polynomial-time algorithm that, given any tile system $\calT$ with $K$ tile types specified by its desired binding behavior, finds a temperature and glue strengths at most $2^{O(K)}$ that implement this behavior or reports that no such strengths exist.
This algorithm is used to show our main result, that there is a polynomial-time algorithm that, given an $n \times n$ square, determines the smallest tile assembly system (at any temperature) that uniquely self-assembles the square, answering the open question of \cite{ACGHKMR02}.

\section{Abstract Tile Assembly Model}
\label{sec-tam-informal}

This section gives a brief informal sketch of the abstract Tile Assembly Model (aTAM).
See Section \ref{sec-tam-formal} for a formal definition of the aTAM.

A \emph{tile type} is a unit square with four sides, each having a \emph{glue label} (often represented as a finite string).
We assume a finite set $T$ of tile types, but an infinite number of copies of each tile type, each copy referred to as a \emph{tile}.
An \emph{assembly}
(a.k.a., \emph{supertile})
is a positioning of tiles on (part of) the integer lattice $\Z^2$; i.e., a partial function $\Z^2 \dashrightarrow T$. 
For a set of tile types $T$, let $\Lambda(T)$ denote the set of all glue labels of tile types in $T$.
We may think of a tile type as a function $t:\dall\to\Lambda(T)$ indicating, for each direction $d\in\dall$ (``north, south, east, west''), the glue label $t(d)$ appearing on side $d$.
A \emph{strength function} is a function $g:\Lambda(T)\to\N$ indicating, for each glue label $\ell$, the strength $g(\ell)$ with which it binds.
Two adjacent tiles in an assembly \emph{interact} if the glue labels on their abutting sides are equal and have positive strength according to $g$.
Each assembly induces a \emph{binding graph}, a grid graph whose vertices are tiles, with an edge between two tiles if they interact.
The assembly is \emph{$\tau$-stable} if every cut of its binding graph has strength at least $\tau$, where the weight of an edge is the strength of the glue it represents.
That is, the assembly is stable if at least energy $\tau$ is required to separate the assembly into two parts.

A \emph{tile assembly system} (TAS) is a quadruple $\calT = (T,\sigma,g,\tau)$, where $T$ is a finite set of tile types, $\sigma:\Z^2 \dashrightarrow T$ is a finite, $\tau$-stable \emph{seed assembly}, $g:\Lambda(T) \to \N$ is a \emph{strength function}, and $\tau$ is a \emph{temperature}.
In this paper, we assume that all seed assemblies $\sigma$ consist of a single tile type (i.e., $|\dom\sigma|=1$).
Given a TAS $\calT = (T,\sigma,g,\tau)$, an assembly $\alpha$ is \emph{producible} if either $\alpha = \sigma$ or if $\beta$ is a producible assembly and $\alpha$ can be obtained from $\beta$ by placing a single tile type $t$ on empty space (a position $p\in\Z^2$ such that $\beta(p)$ is undefined), such that the resulting assembly $\alpha$ is $\tau$-stable.
In this case write $\beta\to_1 \alpha$ ($\alpha$ is producible from $\beta$ by the stable attachment of one tile), and write $\beta\to \alpha$ if $\beta \to_1^* \alpha$ ($\alpha$ is producible from $\beta$ by the stable attachment of zero or more tiles).
An assembly is \emph{terminal} if no tile can be stably attached to it.
Let $\prodasm{\calT}$ be the set of producible assemblies of $\calT$, and let $\termasm{\calT} \subseteq \prodasm{\calT}$ be the set of producible, terminal assemblies of $\calT$.
Given a connected shape $S \subseteq \Z^2$,  a TAS $\calT$ \emph{uniquely self-assembles $S$}  if $\termasm{\calT} = \{\ha\}$ and $\dom\ha = S$.

\section{Finding Strengths to Implement a Tile System}
\label{sec-find-strengths}

In this section we show that there is a polynomial-time algorithm that, given a desired behavior of a tile system, can find strengths to implement that behavior that are at most exponential in the number of tile types, or report that no such strengths exist.
This algorithm is the main technical tool used in the proof of our main result, Theorem~\ref{thm-min-tile-set-squares}.


First, we formalize what we mean by the ``behavior'' of a tile system.
Let $T$ be a set of tile types, and let $t \in T$.
Given a strength function $g:\Lambda(T) \to \N$ and a temperature $\tau \in \Z^+$, define the \emph{cooperation set of $t$ with respect to $g$ and $\tau$} to be the collection
$\mathcal{D}_{g,\tau}(t) = \setr{D \subseteq \dall}{\sum_{d \in D} g(t(d)) \geq \tau}$,
i.e., the collection of subsets of sides of $t$ whose glues have sufficient strength to bind cooperatively.
Let $\sigma:\Z^2 \dashrightarrow T$ be a seed assembly, let $\tau_1,\tau_2 \in \Z^+$ be temperatures, and let $g_1,g_2:\Lambda(T)\to\N$ be strength functions.
We say that the TAS's $\calT_1 = (T,\sigma,g_1,\tau_1)$ and $\calT_2 = (T,\sigma,g_2,\tau_2)$, differing only on their strength function and temperature,
are \emph{locally equivalent} if, for each tile type $t \in T$, $\mathcal{D}_{g_1,\tau_1}(t) = \mathcal{D}_{g_2,\tau_2}(t).$\footnote{Note that the definition of equivalence is independent of the seed assembly; we include it only to be able to talk about the equivalence of TAS's rather than the more cumbersome ``equivalence of triples of the form $(T,g,\tau)$.''}
The behavior of an individual tile type during assembly is completely determined by its cooperation set, in the sense that if $\calT_1$ and $\calT_2$ are locally equivalent, then $\prodasm{\calT_1} = \prodasm{\calT_2}$ and $\termasm{\calT_1} = \termasm{\calT_2}$.\footnote{The converse does not hold, however.  For instance, some tile types may have a subset of sides whose glues never appear together at a binding site during assembly, so it would be irrelevant to the definition of $\prodasm{\calT}$ and $\termasm{\calT}$ whether or not that combination of glues have enough strength to bind.}

Even without any strength function or temperature, by specifying a cooperation set for each tile type, one can describe a ``behavior'' of a TAS in the sense that its dynamic evolution can be simulated knowing only the cooperation set of each tile type.
We call a system thus specified a strength-free TAS.
More formally, a \emph{strength-free TAS} is a triple $(T, \sigma, \mathcal{D})$, where $T$ is a finite set of tile types, $\sigma: \Z^2 \dashrightarrow T$ is the size-1 seed assembly, and $\mathcal{D}:T \to \mathcal{P}(\mathcal{P}(\dall))$ is a function from a tile type $t \in T$ to a cooperation set $\mathcal{D}(t)$.
For a standard TAS $\calT = (T, \sigma, g, \tau)$ and a strength-free TAS $\calT_{\rm sf} = (T, \sigma, \mathcal{D})$ sharing the same tile set and seed assembly, we say that $\calT$ and $\calT_{\rm sf}$ are \emph{locally equivalent} if $\mathcal{D}_{g, \tau}(t) = \mathcal{D}(t)$ for each tile type $t \in T$.

Note that every TAS has a unique cooperation set for each tile type, and hence, has a locally equivalent strength-free TAS.
Say that a strength-free TAS is \emph{implementable} if there exists a TAS locally equivalent to it.
Not every strength-free TAS is implementable.
This is because cooperation sets could be contradictory; for instance, two tile types $t_1$ and $t_2$ could satisfy $t_1(\dN)=t_2(\dN)$ and $\{\dN\} \in \mathcal{D}(t_1)$ but $\{\dN\} \not\in \mathcal{D}(t_2)$.




\begin{theorem}\label{thm-sf-to-standard}
There is a polynomial time algorithm that, given a strength-free TAS $\calT_{\rm sf}=(T,\sigma,\mathcal{D})$, determines whether there exists a locally equivalent TAS $\calT=(T,\sigma,g,\tau)$ and outputs such a $\calT$ if it exists.
Furthermore, it is guaranteed that $\tau \leq 2^{O(|T|)}$.
\end{theorem}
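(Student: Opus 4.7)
The plan is to reduce the problem to a linear programming (LP) feasibility instance with $O(|T|)$ sparse constraints and to bound a basic feasible solution using Cramer's rule together with the sparsity of the constraint matrix.

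Concretely, I would introduce a variable $x_\ell$ for each glue label $\ell \in \Lambda(T)$ (intended to be $g(\ell)$) and a single variable $y$ (intended to be $\tau$), so there are $n := |\Lambda(T)| + 1 \leq 4|T| + 1$ variables. For each tile type $t \in T$ and each subset $D \subseteq \dall$, I include one constraint:
\[
\sum_{d \in D} x_{t(d)} - y \;\geq\; 0 \quad \text{if } D \in \mathcal{D}(t), \qquad y - \sum_{d \in D} x_{t(d)} \;\geq\; 1 \quad \text{if } D \notin \mathcal{D}(t),
\]
together with $x_\ell \geq 0$ for every $\ell$ and $y \geq 1$. All coefficients lie in $\{-1,0,1\}$, each row has at most $5$ nonzero entries, and there are only $O(|T|)$ constraints in total.

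I would first argue that LP feasibility coincides with implementability of $\calT_{\rm sf}$. If $(g, \tau)$ is any locally equivalent integer-valued TAS, then it satisfies the LP, since for $D \notin \mathcal{D}(t)$ the integer inequality $\sum_{d \in D} g(t(d)) < \tau$ gives $\sum_{d \in D} g(t(d)) \leq \tau - 1$. Conversely, given any real solution I would extract a basic feasible solution (BFS), defined by $n$ tight constraints forming a nonsingular $n \times n$ matrix $A$ with the stated sparsity; by Cramer's rule every component equals $\det(A_i)/\det(A)$ for some $A_i$ obtained by replacing one column of $A$ by the right-hand-side vector. Setting $g(\ell) := |\det(A)| \cdot x_\ell$ and $\tau := |\det(A)| \cdot y$ yields integer values, and since the scaling factor $|\det(A)|$ is a positive integer, each member inequality is preserved and each non-member inequality becomes $\sum g(t(d)) \leq \tau - |\det(A)| \leq \tau - 1$, as required for local equivalence. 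The overall algorithm invokes a polynomial-time LP solver (e.g.\ the ellipsoid method with BFS extraction), reports infeasibility when appropriate, and otherwise outputs the scaled integer $(g,\tau)$.

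The main obstacle is the $2^{O(|T|)}$ bound on $\tau$. A naive application of Hadamard's inequality to an $n \times n$ $\{-1,0,1\}$-matrix yields only $|\det(A)| \leq n^{n/2} = 2^{O(|T| \log |T|)}$, which is too weak. The critical observation is that $A$ is \emph{sparse}: each row has at most $5$ nonzero entries, because each constraint involves at most the four side-glue variables of a single tile type plus $y$. Hence every row of $A$ has Euclidean norm at most $\sqrt{5}$, and Hadamard's inequality sharpens to $|\det(A)|, |\det(A_i)| \leq 5^{n/2} = 2^{O(|T|)}$. This gives $|\tau| = |\det(A) \cdot y| = |\det(A_y)| \leq 2^{O(|T|)}$, and similarly for each $g(\ell)$, completing the claimed bound.
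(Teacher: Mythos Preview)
Your approach is essentially identical to the paper's: both cast the problem as an LP feasibility instance over the glue-strength and temperature variables, extract a basic feasible solution, apply Cramer's rule, and exploit the row-sparsity of the constraint matrix via Hadamard's inequality to obtain the $2^{O(|T|)}$ bound (the paper packages the Hadamard/Cramer step into a separate lemma and scales by the LCM of denominators rather than by $|\det A|$, but this is cosmetic). One minor inaccuracy worth fixing: your coefficients need not lie in $\{-1,0,1\}$, since a glue label may repeat among the sides of a single tile (e.g.\ $t(\dN)=t(\dS)$ gives a coefficient of $2$ in the constraint for $D=\{\dN,\dS\}$); the paper addresses this by assuming north/south glues are disjoint from east/west glues so that coefficients are at most $2$, but in any case the coefficients are $O(1)$ and your asymptotic bound is unaffected.
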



\newcommand{\ProofFindLocallyEquiv}{
    Intuitively, we count the number of different total behaviors a tile can have, based on equating its behavior with its cooperation set.
    We then cast the problem of finding strengths to implement a given strength-free tile system as finding solutions to a certain system of linear inequalities, which is solved by Gaussian elimination.
    We then argue that the vertices of the polytope defined by this system have rational numbers with numerator and denominator that are at most exponential in the number of inequalities, which is itself linear in the number of tile types.
    This implies that by multiplying the rational numbers at one of these vertices by their least common multiple, which preserves the inequalities, we obtain an integer solution with values at most exponential in the number of tile types.

    Formally, let $\calT_{\rm sf}$ be a strength-free TAS with a tile set $T$ of $k$ tile types with $u \leq k+1$ different glues.\footnote{Each tile type has 4 sides so it might seem that there could be $4k$ total glues if there are $k$ tile types.  However, in a nontrivial system (one that has no ``effectively null'' glues that appear on only one side of any tile type), for each side of a tile type, the choice of glue for that side is limited to those glues on the opposite side of the $k$ tile types, or alternately we could choose the null strength-0 glue.}
    We would like to decide whether $\calT_{\rm sf}$ is in fact realizable by a TAS.
    To have the tightest upper bound on the temperature, ideally we could to solve the problem of finding the minimum temperature TAS that is locally equivalent to $\calT_{\rm sf}$.
    This optimization problem can be cast as an integer linear program on a temperature variable $\tau$ and a set of glue-strength variables $s_1, s_2, \ldots, s_u$ as in the following example:

    \[
    \begin{array}{llll}
    \textrm{Minimize}	& \tau  &   &    \\
    \textrm{subject to}	& \tau, s_1, s_2, \ldots, s_u &\in& \N \\
                     	& s_1 + s_3 - \tau   & \geq & 0  \\
                     	& s_1 + s_4 + s_6 - \tau    & \geq & 0 \\
                     	& &\ldots& \\
                     	& s_1 + s_2 + s_4 - \tau    & \leq &  -1 \\
                     	& s_2 + 2 s_3 - \tau    & \leq & -1 \\
                     	& &\ldots&
    \end{array}
    \]

    The ``$\geq 0$'' inequalities correspond to the union of all cooperation sets of all tile types, and the ``$\leq -1$'' inequalities correspond to the union of all complements of the cooperation sets; i.e., each set $\mathcal{P}(\dall) \setminus \mathcal{D}(t)$, where $\mathcal{D}(t)$ is the cooperation set of $t$.
    Since we require each strength to be an integer, ``$\leq -1$'' is equivalent to ``$< 0$''.
    Since each tile type $t$ has $|\mathcal{D}(t)|$ ``$\geq 0$'' inequalities (one for each subset of sides in its cooperation set) and $16 - |\mathcal{D}(t)|$ ``$\leq -1$'' inequalities, there are $16k$ inequalities in the integer linear program.

    However, solving this optimization problem is not necessary to prove the theorem.
    Our goal will not be to find the smallest temperature TAS that satisfies the constraints above (which remains an open problem to do in polynomial time), but simply to find any feasible integer solution with temperature and strengths at most $2^{O(k)}$.
    Call the above system of constraints (including the integer constraint) $S_1$.
    Consider the real-valued system of linear inequalities $S_2$ defined as the above inequalities with the integer constraint $\tau, s_1, s_2, \ldots, s_u \in \N$ relaxed to simply $\tau, s_1, s_2, \ldots, s_u \geq 0$.
    Then we have the implication ``$S_1$ has a solution'' $\implies$ ``$S_2$ has a solution''.
    Conversely, any rational-valued solution to $S_2$ can be converted to an integer-valued solution to $S_1$ by multiplying each value by the least common multiple $L$ of the denominators of the rational numbers.\footnote{This actually enforces the stronger condition that each ``$\leq -1$'' inequality is actually ``$\leq -L$''.  This is possible because we have no \emph{upper} bound on the variables, which would prevent multiplication from necessarily preserving the inequalities.
    }
    Furthermore, since the input coefficients are integers, if the feasible polytope of $S_2$ is non-empty, then all of its vertices are rational.
    Therefore $S_2$ has a solution if and only if it has an integer solution.
    Since any integer solution to $S_2$ is a solution to $S_1$, we have the full bidirectional implication ``$S_1$ has a solution'' $\iff$ ``$S_2$ has an integer solution''.
    We can pick any $n$ linearly independent inequalities of $S_2$, interpret them as equalities, and use Gaussian elimination (with exact rational arithmetic) to obtain some vertex of the feasible polytope described by the inequalities, and convert these to integer solutions to $S_1$ through multiplication as described above.
    If we cannot find $n$ linearly independent inequalities (testable by computing the rank of the matrix defining the inequalities) then there is no TAS implementing the behavior of $\calT_\mathrm{sf}$.
    It remains to show that in case there is a solution, the integers we obtain by this method obey the stated upper bound $2^{O(k)}$.


    Each coefficient has absolute value at most 2 (since we may assume N/S glues are disjoint from E/W glues), and each equation has at most 5 nonzero left side terms since each tile type has only 4 sides (together with the $-1$ coefficient for $\tau$).
    Applying Lemma~\ref{lem-rational-small-num-den} (stated and proven after the current proof) with $n=u+1$, $c_1 = 2$, and $c_2 = 5$, we obtain that each vertex is a rational vector $\vec{x}=(\frac{p_1}{q_1},\ldots,\frac{p_{u+1}}{q_{u+1}})$ such that, for each $1 \leq i \leq u+1$,  $|p_i|,|q_i| \leq 2^{u+1} 6^{(u+1) / 2} = 2^{u + 1 + ((u + 1) / 2) \cdot \log 6} < 2^{3u+3}$.
    Since we enforce nonnegativity, $p_i=|p_i|$ and $q_i=|q_i|$.
    We multiply the rational vector $\vec{x}$ by $L = \mathrm{LCM}(q_1,\ldots,q_{u+1})$ to obtain the integer vector $\vec{x}'$.
    By Lemma~\ref{lem-rational-small-num-den}, $L \leq 2^{3u+3}$.
    Then each integer solution value $x_i'$ obeys $x_i' = L \cdot x_i \leq (2^{3u+3})^2 \leq 2^{2(3(k+1)+3)} = 2^{O(k)}$.
}

\begin{proof}
\ProofFindLocallyEquiv
\qedl\end{proof}

\newcommand{\LemmaRationalSmallNumDen}{

    \renewcommand{\vb}{\vec{b}}

    \begin{lem} \label{lem-rational-small-num-den}
      Let $c_1,c_2 \in \Z^+$ be constants, and let $\vb$ be an $n \times 1$ integer column vector and $A=(a_{ij})$ be a nonsingular $n \times n$ integer matrix such that for each $i,j$, $|a_{ij}| \leq c_1$ and $|b_j| \leq c_1$, and each row of $A$ contains at most $c_2$ nonzero entries.
      Then the solution to the linear system $A\vec{x} = \vb$ is a rational vector $\vec{x}\in\Q^n$ such that, if each component $x_i=\frac{p_i}{q_i}$ is written in lowest terms with $p_i,q_i\in\Z$, then $|p_i| \leq c_1^n (c_2+1)^{n/2}$ and $|q_i| \leq c_1^n c_2^{n/2}$.
      Furthermore, the least common multiple of all the $q_i$'s is at most $c_1^n c_2^{n/2}$.
    \end{lem}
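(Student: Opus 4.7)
The plan is to use Cramer's rule together with Hadamard's inequality, which is essentially tailor-made for this kind of sparse-entry-bounded determinant estimate. Since $A$ is nonsingular, for each $i$ the unique solution satisfies $x_i = \det(A_i)/\det(A)$, where $A_i$ is obtained from $A$ by replacing the $i$th column with $\vb$. Both determinants are integers, so writing this quotient in lowest terms as $p_i/q_i$ yields $|p_i| \leq |\det(A_i)|$ and $|q_i|$ dividing $|\det(A)|$, hence $|q_i| \leq |\det(A)|$.

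The main step is then to bound $|\det(A)|$ and $|\det(A_i)|$ via Hadamard's inequality, which says that the absolute value of the determinant of a real square matrix is at most the product of the Euclidean norms of its rows. For $A$, each row has at most $c_2$ nonzero entries, each of magnitude at most $c_1$, so the Euclidean norm of each row is at most $c_1\sqrt{c_2}$; multiplying over $n$ rows gives $|\det(A)| \leq c_1^n c_2^{n/2}$. For $A_i$, replacing one column by $\vb$ can only add one extra nonzero entry per row, and the magnitude of $\vb$'s entries is still at most $c_1$, so each row of $A_i$ has at most $c_2+1$ nonzero entries bounded by $c_1$, giving Euclidean norm at most $c_1\sqrt{c_2+1}$ and hence $|\det(A_i)| \leq c_1^n (c_2+1)^{n/2}$. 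Combining gives the two stated bounds on $|p_i|$ and $|q_i|$.

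Finally, for the bound on the least common multiple, I would observe that since $x_i = \det(A_i)/\det(A)$ is already a representation as a ratio of integers, when reduced to lowest terms the denominator $q_i$ must divide $|\det(A)|$. Therefore every $q_i$ divides $|\det(A)|$, so $\mathrm{lcm}(q_1,\ldots,q_n)$ divides $|\det(A)|$ and is in particular bounded above by $|\det(A)| \leq c_1^n c_2^{n/2}$, matching the claimed bound.

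There isn't really a serious obstacle here; the only thing to be careful about is the ``off by one'' accounting in the sparsity of $A_i$ (whether replacing a column can increase the row's support by one) and the fact that after reduction to lowest terms the denominator divides, rather than equals, $\det(A)$, which is exactly what makes the lcm bound clean. Everything else is a direct invocation of Cramer's rule and Hadamard.
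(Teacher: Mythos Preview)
Your proposal is correct and follows essentially the same argument as the paper: Cramer's rule to express each $x_i$ as a ratio of integer determinants, Hadamard's inequality applied row-wise to bound $|\det A|$ and $|\det A_i|$ via the sparsity and entry bounds, and the observation that each reduced denominator $q_i$ divides $\det A$ to get the lcm bound. The only cosmetic difference is ordering; the substance is identical.
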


    \begin{proof}
      Recall Hadamard's inequality $|\det A| \leq \prod_{i=1}^n \| v_i \|_2$, where $v_i$ is the $i^\text{th}$ row of $A$.\footnote{Hadamard's inequality is typically stated for $v_i$ a column of $A$, but the determinant of a matrix and its transpose are equal so the bound holds when taking the product over rows as well.}
      Since $v_i$ has at most $c_2$ nonzero entries that are each at most absolute value $c_1$, Hadamard's inequality tells us that $|\det A| \leq \prod_{i=1}^n \sqrt{c_2 \cdot c_1^2} = c_1^n c_2^{n/2}$.
      Similarly, letting $A_i$ be $A$ with column vector $\vec{b}$ replacing $A$'s $i^\text{th}$ column, $A_i$ has at most $c_2+1$ nonzero entries per row, so a similar argument gives $|\det A_i| \leq c_1^n (c_2+1)^{n/2}$.
      The $i^\text{th}$ solution is $x_i = \frac{\det A_i}{\det A}$ by Cramer's rule.
      Since $A$ and $A_i$ are integer-valued, so are $\det A$ and $\det A_i$, whence the upper bounds on $|\det A|$ and $|\det A_i|$ also apply to $|q_i|$ and $|p_i|$, respectively, since they are $x_i$'s lowest terms representation.
      Since the lowest terms representation of each $q_i$ necessarily divides $\det A$, this implies that their least common multiple is also at most $c_1^n c_2^{n/2}$.
    \qedl\end{proof}
}

\LemmaRationalSmallNumDen

\section{Finding the Minimum Tile Assembly System Assembling a Square at any Temperature}
\label{sec-min-tiles-square}

\newcommand{\DMTSS}{\textsc{MinDirectedTileSetSquare}}
\newcommand{\Kdtc}{\mathrm{C}^\mathrm{dtc}}
\newcommand{\US}{\textsc{Unique-Shape}}

This section proves the main result of this paper.
We show that there is a polynomial-time algorithm that, given an $n \times n$ square $S_n$, computes the smallest TAS that uniquely self-assembles $S_n$.
Adleman, Cheng, Goel, Huang, Kempe, Moisset de Espan\'{e}s, and Rothemund~\cite{ACGHKMR02} showed that this problem is polynomial-time solvable when the temperature is restricted to be 2, and asked whether there is an algorithm that works when the temperature is unrestricted, which we answer affirmatively.


The next proposition is useful in the proof of our main theorem.

\begin{prop}\label{prop-num-TAS}
  For each $k\in\Z^+$, there are at most $168^k k^{4k+2}$ implementable strength-free TAS's with at most $k$ tile types.
\end{prop}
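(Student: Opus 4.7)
The plan is to bound the number of implementable strength-free TAS's by counting three components (tile-set structure, cooperation-set assignment, seed) separately for each tile-count $k' \leq k$, then summing.

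First, I would observe that for any TAS $(T,\sigma,g,\tau)$ and any tile type $t \in T$, the cooperation set $\mathcal{D}_{g,\tau}(t)$ is \emph{upward-closed} in the Boolean lattice $\mathcal{P}(\dall)$: if $D \in \mathcal{D}_{g,\tau}(t)$ and $D \subseteq D' \subseteq \dall$, then $\sum_{d \in D'} g(t(d)) \geq \sum_{d \in D} g(t(d)) \geq \tau$ since glue strengths are nonnegative, so $D' \in \mathcal{D}_{g,\tau}(t)$ as well. Hence each implementable cooperation set is one of the filters (upward-closed subsets) of $\mathcal{P}(\dall)$, and the number of such filters is the fourth Dedekind number $D_4 = 168$. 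Thus for a tile set of size $k'$, there are at most $168^{k'}$ cooperation-set assignments to consider.

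Next, I would bound the number of tile-set structures with $k'$ tile types. Reusing the reasoning in the footnote of the proof of Theorem~\ref{thm-sf-to-standard}, in a nontrivial strength-free TAS each of the $4k'$ tile sides can carry at most $k'+1$ distinct glue labels (one of the $k'$ labels appearing on the corresponding opposite-direction side of some tile, or the null glue), giving at most $(k'+1)^{4k'}$ ordered tile specifications (which overcounts the unordered tile sets). The seed, a single tile from the tile set, adds another factor of $k'$. Combining and summing over $1 \leq k' \leq k$, the total is at most
\[
\sum_{k'=1}^{k} 168^{k'} (k'+1)^{4k'} k' \;\leq\; 168^k\, k^{4k+2},
\]
after absorbing the $(k+1)^{4k}$ factor into $k^{4k+O(1)}$ using the estimate $(1+1/k)^{4k} \leq e^4$ and noting that the sum is dominated by its $k' = k$ term.

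The main substantive step is the filter observation together with invoking $D_4 = 168$; the rest is routine bookkeeping. The only obstacle I anticipate is matching the exact stated constant $168^k k^{4k+2}$, which is purely a matter of careful arithmetic rather than deeper reasoning — any slack introduced by the $(k+1)$-vs.-$k$ and the outer sum can be absorbed into the polynomial factor $k^{4k+2}$.
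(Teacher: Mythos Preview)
Your approach is essentially the paper's: observe that cooperation sets of an implementable strength-free TAS are upward-closed in $\mathcal{P}(\dall)$, invoke the fourth Dedekind number $168$ to count them, multiply by the number of glue-label assignments and seed choices, and sum over $k' \leq k$.

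The one issue is arithmetic. The paper takes only $k$ glue-label choices per side (not $k+1$), yielding $168^k k^{4k}$ tile sets, then $168^k k^{4k+1}$ TAS's with exactly $k$ tile types, and the final sum $\sum_{i=1}^{k} 168^{i} i^{4i+1} \leq 168^k k^{4k+2}$ holds on the nose. With your $(k'+1)^{4k'}$ factor, the claimed inequality $\sum_{k'=1}^{k} 168^{k'} (k'+1)^{4k'} k' \leq 168^k k^{4k+2}$ fails for small $k$: already at $k=1$ the left side is $168\cdot 2^4 = 2688$ while the right side is $168$. Your remark that $(1+1/k)^{4k} \leq e^4$ only buys you the bound once $k \geq e^4 \approx 55$. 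So to match the stated constant exactly you need to argue, as the paper implicitly does, that $k$ glue labels per side suffice (any glue not appearing on the opposite side of some tile is effectively null and can be identified with one fixed label), rather than $k+1$.
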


\begin{proof}
    If $\calT=(T,\sigma,\mathcal{D})$ is an implementable strength-free TAS, then for all $t \in T$, the cooperation set $\mathcal{D}(t)$ of $t$ is a collection of subsets of $\mathcal{\dall}$ that is closed under the superset operation, i.e., if $D \subseteq D' \subseteq \dall$ and $D \in \mathcal{D}(t)$, then $D' \in \mathcal{D}(t)$.
    This closure property is due to the fact that having strictly more sides available to bind cannot inhibit binding that would otherwise occur, as long as strengths are assumed to be nonnegative.
    Each cooperation set $\mathcal{D}$ is defined by a unique \emph{antichain}, which is a subcollection $\mathcal{D}' = \{D'_1, \ldots, D'_m\} \subseteq \mathcal{D}$ such that, for all $1 \leq i,j \leq m$, $D'_i \not\subseteq D'_j$, whose closure under the superset operation is equal to $\mathcal{D}$.
    The antichain consists of the minimal elements of $\mathcal{D}$ under the partial order $\subseteq$.
    The number of antichains of subsets of $\dall$ is given by the fourth Dedekind number $M(4) = 168$~\cite{oeisA000372}.
    Thus, a tile type has at most 168 different possible cooperation sets.

    For each side of a tile type, there are at most $k$ glue labels to choose, so there are at most $k^4$ ways to assign these labels to each side.
    Therefore, encoding each tile type as a list of 4 glue labels and cooperation set, and encoding a TAS as a list of tile types, there are at most $(168 k^4)^k = 168^k k^{4k}$ different strength-free tile sets with $k$ tile types.
    Since there are $k$ choices for the seed tile, there are at most $168^k k^{4k+1}$ different strength-free TAS's with $k$ tile types.
    Thus there are at most $\sum_{i=1}^k 168^i i^{4i+1} \leq 168^k k^{4k+2}$ such TAS's with \emph{at most} $k$ tile types.
\qedl\end{proof}

The following theorem is the main result of this paper.

\begin{theorem} \label{thm-min-tile-set-squares}
  There is a polynomial-time algorithm that, given an $n \times n$ square $S_n$, outputs a minimal TAS $\calT$ that uniquely self-assembles $S_n$.
\end{theorem}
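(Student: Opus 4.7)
The plan is a polynomially bounded brute-force enumeration combined with Theorem~\ref{thm-sf-to-standard}. Adleman, Cheng, Goel, and Huang~\cite{AdChGoHu01} proved that every $n \times n$ square admits a temperature-$2$ TAS with $K = O(\log n / \log \log n)$ tile types that uniquely self-assembles it; hence a minimum TAS (at any temperature) has at most $K$ tile types. So it suffices to enumerate all candidate TAS's with at most $K$ tile types, verify which of them uniquely self-assemble $S_n$, and output one of smallest size.

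By Proposition~\ref{prop-num-TAS}, the number of implementable strength-free TAS's on at most $K$ tile types is at most $168^K K^{4K+2}$. For $K = c\log n/\log\log n$, this quantity is $n^{O(1)}$: the bound $K^K \le n^{O(1)}$ follows from $K \log K \le c \log n$, and $168^K = n^{o(1)}$. First I would enumerate all triples $(T,\sigma,\mathcal{D})$ with $|T| \le K$ whose glue labels are drawn from a canonical set of size at most $4K$ and whose cooperation sets $\mathcal{D}(t)$ are upward-closed subcollections of $\mathcal{P}(\dall)$; this enumeration has polynomial size in $n$.

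For each enumerated strength-free TAS $\calT_{\rm sf}$ I would apply Theorem~\ref{thm-sf-to-standard} to test in polynomial time whether it is implementable, and if so to obtain a locally equivalent standard TAS $\calT$ whose temperature and glue strengths are at most $2^{O(K)} = n^{o(1)}$, hence of polynomial bit size. Local equivalence gives $\termasm{\calT} = \termasm{\calT_{\rm sf}}$, so it remains to check whether $\calT$ uniquely self-assembles the shape $S_n$. This verification is a known polynomial-time problem for the aTAM when all parameters are polynomially bounded: one simulates growth from $\sigma$ and checks that the production process deterministically yields a single terminal assembly whose domain equals $S_n$ --- exactly the verification primitive already used by~\cite{ACGHKMR02} inside their temperature-$2$ algorithm. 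I would return the $\calT$ of smallest $|T|$ that passes this check.

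The main obstacle is the counting step: without the a priori $O(\log n / \log\log n)$ upper bound on tile complexity, the enumeration would not be polynomial, and conversely, the exponential temperature lower bound discussed in the introduction forbids the more naive approach of enumerating standard TAS's with small temperature directly. The two-stage strategy --- first enumerate the comparatively few \emph{strength-free} systems, then realize each via Theorem~\ref{thm-sf-to-standard} with a possibly large but still $n^{o(1)}$ temperature --- is exactly what makes the whole algorithm polynomial in $n$.
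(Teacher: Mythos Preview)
Your proposal is correct and follows essentially the same approach as the paper: bound the tile complexity by $K=O(\log n/\log\log n)$ via~\cite{AdChGoHu01}, enumerate the polynomially many strength-free TAS's with at most $K$ tile types (Proposition~\ref{prop-num-TAS}), test unique self-assembly of $S_n$ using the $\US$ algorithm of~\cite{ACGHKMR02}, and realize a minimal one as a standard TAS via Theorem~\ref{thm-sf-to-standard}. The only cosmetic difference is that the paper runs $\US$ directly on the strength-free system and converts to a standard TAS only at the end, whereas you convert first and then test; both orderings are valid.
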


\newcommand{\ProofMinTileSetSquares}{
%
    In~\cite{ACGHKMR02}, the authors study the variant of the minimum tile set problem restricted to squares where the temperature is fixed at $\tau=2$.
    They use the following argument to show the problem is solvable in polynomial time.
    For all $n \in \N$, let $S_n = \{0,1,\ldots,n-1\}^2$ denote the $n \times n$ square.
    Adleman, Cheng, Goel, and Huang~\cite{AdChGoHu01} showed that for all $n \in \N$, there is a TAS with at most $O(\frac{\log n}{\log \log n})$ tile types that uniquely self-assembles $S_n$.
    The proof of~\cite{ACGHKMR02} first shows by a simple counting argument that there are at most a polynomial in $n$ number of temperature-2 TAS's with at most $O(\frac{\log n}{\log \log n})$ tile types, using the fact that all strengths may without loss of generality be assumed to be 0, 1, or 2.
    They then make use of a polynomial-time algorithm $\US$ devised in the same paper~\cite{ACGHKMR02} that, given any shape $S$ and any TAS $\calT$, determines whether $\calT$ uniquely self-assembles $S$. 
    Finding the minimum TAS for $S_n$ then amounts to iterating over every ``small'' ($O(\frac{\log n}{\log \log n})$ tile types) TAS $\calT$ and using the algorithm $\US$ to check which of these systems assemble $S_n$.
    The upper bound of~\cite{AdChGoHu01} guarantees that $\US$ will report a positive answer for at least one of these systems.

    Let $c \in \Z^+$ be the constant, shown to exist in~\cite{AdChGoHu01}, such that, for all $n\in\N$, some tile system with at most $c \log n / \log \log n$ tile types uniquely self-assembles $S_n$.
    Let $k = c \log n / \log \log n$.
    Enumerate each strength-free TAS with at most $k$ tile types and test whether the strength-free system uniquely self-assembles $S_n$.
    The algorithm $\US$ of~\cite{ACGHKMR02} can be executed just as easily with a strength-free TAS as with a standard TAS, so $\US$ may be used for this test.

    Proposition \ref{prop-num-TAS} tells us that the number of strength-free TAS's with at most $k$ tile types is at most $168^k k^{4k+2}$.
    We have that
    $
        168^k
        \leq
        168^{c \log n / \log \log n}
        \leq
        168^{c \log n}
        \leq
        (2^{\log n})^{c \log 168}
        \leq
        n^{8 c},
    $
    and
    \begin{eqnarray*}
      k^{4k+2}
      &\leq&
      k^2 (c \log n / \log \log n)^{4 c \log n / \log \log n}
      \\&=&
      k^2 (2^{\log (c \log n / \log \log n)})^{4 c \log n / \log \log n}
      \\&=&
      k^2 (2^{\log c + \log \log n - \log \log \log n})^{4 c \log n / \log \log n}
      \\&=&
      k^2 2^{4 c \log n (\log c + \log \log n - \log \log \log n) / \log \log n}
      \\&=&
      k^2 (2^{\log n})^{4 c (\log c + \log \log n - \log \log \log n) / \log \log n}
      \\&=&
      k^2 n^{4 c \log c / \log \log n + 4 c - 4 c \log \log \log n / \log \log n}
      \\&\leq&
      n^{4 c \log c + 4 c + 2},
    \end{eqnarray*}
    whence the number of strength-free TAS's to search is at most $n^{4 c \log c + 12 c + 2} = \mathrm{poly}(n)$.
    Once we obtain a smallest such strength-free TAS $\calT_\mathrm{sf}$, we run the algorithm described in Theorem~\ref{thm-sf-to-standard} to obtain a minimal standard TAS $\calT$ that uniquely self-assembles $S_n$.
}

    \begin{proof}
    \ProofMinTileSetSquares
    \qedl\end{proof}


    We note that while we have stated the theorem for the family of square shapes, our method, as well as that of~\cite{ACGHKMR02}, works for any family of shapes $S_1, S_2, \ldots$ where $|S_n| = \mathrm{poly}(n)$ and the tile complexity of $S_n$ is at most $O(\frac{\log n}{\log \log n})$.
    This includes, for instance, the family $\{ T_1, T_2, \ldots \}$, where $T_n$ is a width-$n$ right triangle, and for each $q \in \Q^+$ the family $\{ R_{q,1}, R_{q,2},\ldots \}$, where $R_{q,n}$ is the $n \times \floor{qn}$ rectangle.




\section{Bounds on Temperature Relative to Number of Tile Types}

This section shows two bounds relating  the number of tile types in a TAS to its temperature.
The first bound, Theorem~\ref{thm-exponential-tiles}, shows that there are TAS's that require temperature \emph{exponential} in the number of tile types (in the sense of local equivalence as defined in Section~\ref{sec-find-strengths}), if any combination of sides may be used for binding.
This result can be interpreted to mean that the algorithm of~\cite{AdChGoHu01} to find the minimum temperature-2 TAS for assembling an $n \times n$ square, which searches over all possible assignments of strengths to the glues, cannot be extended in a straightforward manner to handle larger temperatures, which is why it is necessary for the algorithm of Theorem~\ref{thm-min-tile-set-squares} to ``shortcut'' through the behaviors of tile types rather than enumerating strengths.
The second bound, Theorem~\ref{thm-linear-temperature}, on the other hand, shows that if we restrict attention to those (quite prevalent) classes of tile systems that use only one or two sides of tiles to bind, then \emph{linear} temperature always suffices.

\subsection{Tile Assembly Systems Requiring Temperature Exponential in Number of Tile Types}
\label{sec-exponential-tiles}

In this section, we prove that a temperature that is exponential in the number of tile types given by Theorem~\ref{thm-sf-to-standard} is optimal, although there is a gap between the exponents ($2^{|T|/4}$ for Theorem~\ref{thm-exponential-tiles} below versus $O(2^{6 |T|})$ for Theorem~\ref{thm-sf-to-standard}).

\newcommand{\FigExponentialTiles}{
\begin{figure}[htb]
\begin{center}
  \includegraphics[width=\opt{normal}{4}\opt{submission}{3.0}in]{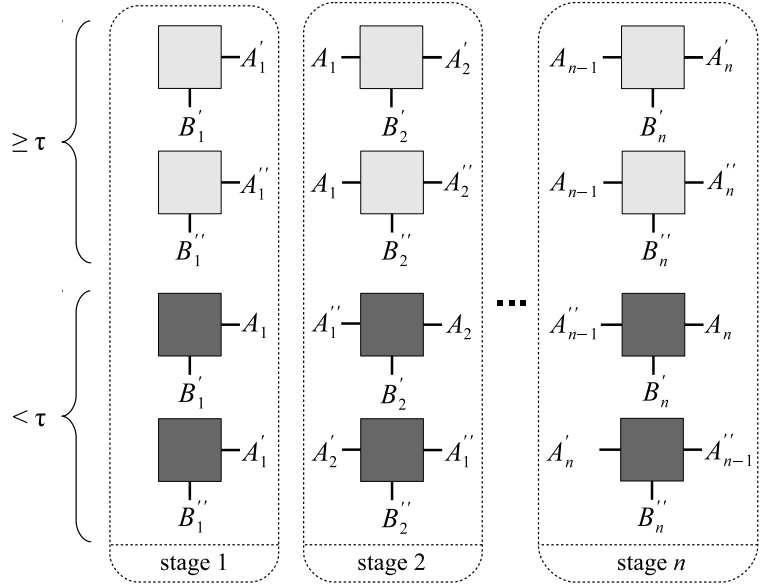}
  \caption{\label{fig:exponential-tiles-example} \figuresize
  A set of tile types requiring temperature that is exponentially larger than the number of tile types.
  There are stages $1, 2, \ldots , n$, with stage $i$ containing 4 tiles, and stage $i$ ensuring that the gap between the largest and smallest strength in the stage is at least $2^i$.
  In each stage, each of the top two light tiles represents a triple (a pair in stage 1) of glues whose sum is at least $\tau$.
  Each of the bottom two dark tiles represents a triple of glues whose sum is less than $\tau$.
  The inequalities are satisfiable, for instance, by setting $A_n = 3^{n-1}, A'_n = 2 A_n, A''_n = 3 A_n, B_n = \tau - A_n, B'_n = \tau - A'_n, B''_n = \tau - A''_n$.
  }
\end{center}
\end{figure}
}

\FigExponentialTiles

\begin{theorem} \label{thm-exponential-tiles}
  For every $n\in\Z^+$, there is a TAS $\calT = (T,\sigma,g,\tau)$ such that $|T| = 4n$ and for every TAS $\calT' = (T,\sigma,g',\tau')$ that is locally equivalent to $\calT$, $\tau' \geq 2^{n}.$
\end{theorem}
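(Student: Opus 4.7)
The plan is to build a concrete family of tile assembly systems realizing the figure's template: for each $n$ the tile set $T$ consists of $n$ disjoint stages, each stage $i$ contributing four tile types that encode (via their cooperation sets) two binding constraints of the form ``sum of three strengths $\geq \tau$'' (the light tiles) and two anti-binding constraints of the form ``sum of three strengths $\leq \tau - 1$'' (the dark tiles). I will introduce six glue labels per stage with strengths $A_i, A'_i, A''_i$ and $B_i, B'_i, B''_i$, and arrange the triples so that the light constraints effectively pin the pair sums $A_i + B_i$, $A'_i + B'_i$, and $A''_i + B''_i$ near $\tau$, while the dark constraints force strict inequalities such as $A_i + B'_i \leq \tau - 1$ and $A'_i + B''_i \leq \tau - 1$. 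Together with the within-stage equalities, these yield $A'_i \geq A_i + 1$ and $A''_i \geq A'_i + 1$, giving the intra-stage gap $A''_i - A_i \geq 2$.

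Next, I will reuse glues across consecutive stages so that the ``small'' strengths of stage $i{+}1$ are forced to dominate the ``large'' strengths of stage $i$. Concretely, the three-way constraints of stage $i{+}1$ will be designed so that, after cancelling pair sums equal to $\tau$, one obtains an inequality of the form $A_{i+1} - A''_i \geq A''_i - A_i$, i.e.\ the gap at least doubles. An inductive argument then shows that the maximum gap across stage $n$ is at least $2^{n}$ times the initial unit gap, forcing $\tau \geq 2^{n}$ for any locally equivalent TAS $\calT'=(T,\sigma,g',\tau')$; this is because local equivalence preserves every cooperation set and hence every one of the $\geq \tau$ and $\leq \tau - 1$ inequalities (with $\tau$ replaced by $\tau'$ and $g$ by $g'$). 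The sample assignment $A_i = 3^{i-1}$, $A'_i = 2A_i$, $A''_i = 3A_i$, $B_i = \tau - A_i$, etc., from the figure caption will be used both to certify implementability (so that $\calT$ indeed exists) and to check that the doubling inequalities are tight up to the chosen factor.

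The last step is to verify that the specified cooperation sets are consistent with a strength-free TAS and with a legal seed placement; since local equivalence is defined purely in terms of the cooperation sets $\mathcal{D}_{g,\tau}(t)$ and is insensitive to the seed per the footnote in Section~\ref{sec-find-strengths}, I can take $\sigma$ to be any single stage-$1$ tile without affecting the argument. I will also have to check superset-closure of the cooperation sets (the condition used in the proof of Proposition~\ref{prop-num-TAS}); this is automatic from the sample strength function.

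The main obstacle is the bookkeeping of glue reuse between stages: I need the inter-stage 3-subsets to produce exactly the desired telescoping inequalities, while simultaneously keeping the intra-stage constraints non-contradictory and compatible with the explicit witness strengths. Once the glue-sharing pattern is fixed so that each of the $4n$ tiles encodes one of the required $\geq \tau$ or $\leq \tau - 1$ inequalities and the sample solution satisfies all of them, the exponential lower bound follows by straightforward induction on $n$.
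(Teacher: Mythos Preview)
Your overall strategy matches the paper's: $n$ stages of four tiles each, two ``$\geq\tau$'' and two ``$<\tau$'' constraints per stage, glues shared between consecutive stages, induction on $n$, and the witness strengths $A_i=3^{i-1}$, $A'_i=2A_i$, $A''_i=3A_i$, $B^{(\cdot)}_i=\tau-A^{(\cdot)}_i$ to certify implementability. So the plan is on the right track.

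There is, however, a genuine gap in your doubling mechanism. You derive an \emph{intra}-stage gap $A''_i-A_i\geq 2$ from pair constraints, and then an \emph{inter}-stage inequality $A_{i+1}-A''_i\geq A''_i-A_i$. But these two together only give linear growth: if every intra-stage gap is merely $\geq 2$, then your inter-stage inequality bounds $A_{i+1}-A''_i$ by $2$ as well, and you end up with $A''_n\leq O(n)$, not $2^n$. The inequality you wrote does not feed back into the next intra-stage gap, so nothing compounds.

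The paper's construction fixes this by making the \emph{intra}-stage gap itself double. For stage $n>1$ the two light/dark pairs are
\[
A'_n + A_{n-1} + B'_n \geq \tau > A''_{n-1} + A_n + B'_n,
\qquad
A''_n + A'_{n-1} + B''_n \geq \tau > A''_{n-1} + A'_n + B''_n,
\]
so that subtracting within each pair yields $A'_n - A_n \geq A''_{n-1}-A_{n-1}$ and $A''_n - A'_n \geq A''_{n-1}-A'_{n-1}$; hence $A''_n-A_n \geq 2^{n-1}+2^{n-1}=2^n$ by induction. In other words, the third glue in each triple is a glue \emph{from the previous stage}, and it is the previous stage's gap (not a fixed unit) that gets injected into each half of the current stage's gap. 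Stage~$1$ alone uses pair constraints (not triples) to start the induction at $A''_1-A_1\geq 2$. If you rewire your glue-sharing pattern to realize these specific triples, your sketch goes through verbatim with the paper's induction.
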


\newcommand{\ProofExponentialTiles}{
  The tile set $T$ is shown in Figure~\ref{fig:exponential-tiles-example}.\footnote{We do not specify the seed assembly since we are concerned only with the local behavior of the tiles.  To make the local equivalence nontrivial, we would need to add a small number of tile types to the TAS to ensure that each tile shown is actually attachable at some point during assembly.  However, but this would not affect the asymptotic size of the tile set as $n\to\infty$, so the exponential lower bound on the temperature would still hold.}
  In each stage, each of the top two light tile types represents a triple (a pair in stage 1) of glues whose sum is at least $\tau$.
  Each of the bottom two dark tiles represents a triple of glues whose sum is less than $\tau$.
  For the dark tile types to be nontrivial, we could imagine that the (unlabeled) north glue is strong enough to cooperate with some of the other glues.
  The actual strengths of the glues are left as variables, but the caption of Figure~\ref{fig:exponential-tiles-example} gives one example of strengths that would satisfy the inequalities that the tile types represent.

  We prove by induction on $n$ that $A''_n \geq A_n + 2^n$.
  For the base case, in the first stage, the top light tile type and top dark tile type enforce that
  $
    A'_1 + B'_1 \geq \tau > A_1 + B'_1,
  $
  so $A_1 < A'_1$.
  Similarly, the bottom light tile type and the bottom dark tile type enforce that $A'_1 < A''_1$.
  Therefore $A''_1 \geq A_1 + 2$.

  For the inductive case, assume that $A''_{n-1} \geq A_{n-1} + 2^{n-1}$.
  The top dark tile type enforces that
  $
    \tau > A''_{n-1} + A_n + B'_n,
  $
  and by the induction hypothesis,
  $
    A''_{n-1} + A_n + B'_n
    \geq
    A_{n-1} + 2^{n-1} + A_n + B'_n.
  $
  The top light tile type enforces that
  $
    A'_n + A_{n-1} + B'_n \geq \tau,
  $
  which combined with the previous two inequalities shows that
  $
    A'_n \geq 2^{n-1} + A_n.
  $
  A similar analysis with the bottom light tile type and bottom dark tile type shows that
  $
    A''_n \geq 2^{n-1} + A'_n,
  $
  whence
  $
    A''_n \geq  2^{n-1} + 2^{n-1} + A_n,
  $
  establishing the inductive case.

  Since it can be assumed without loss of generality that strengths are at most $\tau$, this shows that the tile set consisting of $n$ stages, having $|T|=4n$ tile types, requires $\tau \geq 2^n$ to be realized.
}

    \begin{proof}
    \ProofExponentialTiles
    \qedl\end{proof}


\subsection{Temperature Linear in the Number of Tile Types Suffices for 2-Cooperative Equivalence}

Theorem~\ref{thm-exponential-tiles} shows that temperature exponentially larger than the number of tile types is sometimes necessary for a TAS's behavior to be realized by integer strengths.
However, the definition of local equivalence assumes that all possible combinations of sides of a tile type may be present in an assembly.
Many TAS's are more constrained than this.
There is a wide class of TAS's that we term \emph{2-cooperative}, meaning that all binding events during all assembly sequences use only 1 or 2 sides that bind with positive strength.
Nearly all theoretical TAS's found in the literature are 2-cooperative (indeed, temperature 2 systems by definition cannot \emph{require} three sides to be present, although the model allows tile attachments with excess strength).
In this section we show that the 3-cooperativity of Figure~\ref{fig:exponential-tiles-example} is necessary, by showing that 2-cooperative systems can always be realized by strengths linear in the number of tile types.

\begin{theorem} \label{thm-linear-temperature}
  Let $\calT = (T,\sigma,g,\tau)$ be a TAS, and let $\mathcal{D}_{g,\tau}^{(2)}(t) \subseteq \mathcal{D}_{g,t}(t)$ be the cooperation set of $t$ with respect to $g$ and $\tau$ restricted to containing only subsets of $\dall$ of cardinality 1 or 2.  Then there is a TAS $\calT' = (T,\sigma,g',\tau')$ with $\tau' \leq 2 |T| + 2$ such that, for each $t \in T$, $\mathcal{D}^{(2)}_{g,\tau}(t) = \mathcal{D}^{(2)}_{g',\tau'}(t)$.
\end{theorem}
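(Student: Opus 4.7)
The plan is to realize the 2-cooperation behavior via a rank-merge construction on the distinct glue strengths and their ``duals''. Let $v_1 < v_2 < \cdots < v_m$ be the distinct values taken by $g$ on the glue labels of $T$, and define the dual values $\bar v_i := \tau - v_i$. The key observation is that a pair with strengths $v_i, v_j$ has combined strength at least $\tau$ iff $v_j \geq \bar v_i$, while a singleton of strength $v_i$ meets threshold iff $v_i \geq \tau$. Hence the entire 2-cooperation behavior is encoded by the relative order among the values $v_i$ and $\bar v_i$ together with the threshold $\tau$.

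First I would split the glues into \emph{strong} ($g(\ell) \geq \tau$) and \emph{weak} ($g(\ell) < \tau$), assigning strong glues new strengths at least $\tau'$ so that their singletons meet and any pair containing a strong glue meets automatically. For the weak glues, let $v_1 < \cdots < v_{m_w}$ be the distinct weak strengths and $\bar v_i = \tau - v_i > 0$ the corresponding duals. I would merge the two sequences $\{v_i\}$ and $\{\bar v_i\}$ into a single sorted list of $2m_w$ values, breaking ties when $v_i = \bar v_j$ (i.e.\ $v_i + v_j = \tau$, a pair which meets) by placing $\bar v_j$ just before $v_i$. Then set $g'(\ell)$ to be the rank of $g(\ell)$ in this merged list (for weak $\ell$), and $\tau' := \text{rank}(v_i) + \text{rank}(\bar v_i)$.

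The essential verification is that $\tau'$ is well-defined, i.e.\ $\text{rank}(v_i) + \text{rank}(\bar v_i)$ is the same constant for every $i$. This follows because the map $x \mapsto \tau - x$ is an involution on the merged multiset that swaps $v_i \leftrightarrow \bar v_i$ and reverses the sort order; in combination with the tie-breaking rule, this yields a rank-reversal bijection forcing the sum to equal $2m_w + 1$. Pair preservation is then immediate from the chain $v_i + v_j \geq \tau \iff v_j \geq \bar v_i \iff v'_j \geq \bar v'_i = \tau' - v'_i \iff v'_i + v'_j \geq \tau'$, and singletons are handled by the strong/weak split. Since $m_w$ is at most the number of distinct glue labels, which is linear in $|T|$, we obtain $\tau' = O(|T|)$; a careful accounting (for instance treating the north--south glue pool and the east--west glue pool separately) gives the stated bound $\tau' \leq 2|T|+2$.

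The main obstacle is establishing the rank-reversal symmetry \emph{at tied values}: the tie-break must simultaneously map the ``$v_i+v_j=\tau$ meets'' case to a ``$v'_i+v'_j \geq \tau'$ meets'' case, and preserve the rank-sum invariant under the involution $x \mapsto \tau - x$. Once this delicate tie-break analysis is carried out, pair and singleton preservation follow from the equivalences above and the temperature bound is a routine counting argument.
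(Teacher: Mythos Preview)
Your approach is correct and rests on the same rank-remapping idea as the paper; only the bookkeeping differs. Rather than merging all weak strengths with their duals $\tau-v_i$ and invoking the involution $x\mapsto\tau-x$ to obtain the rank-sum invariant, the paper splits the weak strengths asymmetrically into ``low'' $L=\{\ell_1<\cdots<\ell_n\}\subset(0,\tau/2)$ and ``high'' $H\subset[\tau/2,\tau)$, assigns $\ell_i\mapsto i$, and then collapses $H$ into at most $n{+}1$ equivalence classes $H_j=\{h:\tau-\ell_j\le h<\tau-\ell_{j-1}\}$, sending every element of $H_j$ to the single value $2n{+}2{-}j$; the new temperature is $\tau'=2n{+}2$. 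Your symmetric merge is arguably cleaner (the involution argument handles the tie-break you worry about in one stroke), but the paper's asymmetric split buys a tighter constant, since its $\tau'$ depends only on $n=|L|$ rather than on all $m_w=|L|+|H|$ weak strengths. Regarding the exact bound: your suggestion of ``treating the NS and EW pools separately'' does not by itself recover $\tau'\le 2|T|+2$, because cross-pool pairs $\{\dN,\dE\}$ etc.\ must still share a common temperature. What does work is observing that the paper's low/high construction can equally well be run with the roles of $L$ and $H$ swapped (rank the high strengths and collapse the low ones into classes), yielding $\tau'=2\min(|L|,|H|)+2\le |L|+|H|+2\le 2|T|+2$.
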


That is, $\calT'$ is equivalent in behavior to $\calT$, so long as all attachments involve only 1 or 2 sides.

\newcommand{\ProofTwoCooperative}{
  Let $K = |T|$.
  Let $\Lambda(T)$ denote the set of all glue labels on tile types in $T$.
  Let $G = \setr{g(\sigma)}{\sigma \in \Lambda(T)} \setminus \{0,\tau\}$.
  That is, $G$ is the set of all positive but insufficient glue strengths used in this system.
  $|G| \leq 2 K$ since there are at most $|T|$ north-south glues and at most $|T|$ east-west glues.

  We split $G$ into two subsets $L = \setr{g \in G}{0 < g < \tau/2}$ and 
  $H = \{\ g \in G\ |\ \tau/2 \leq g < \tau\ \}$.
  Let $L = \{\ell_1, \ell_2, \ldots, \ell_n\}$ such that $\ell_1 < \ell_2 < \ldots < \ell_n < \tau/2$, and let $H = \{h_1, h_2, \ldots, h_m \}$ with $\tau/2 \leq h_1 < h_2 \ldots h_m < \tau.$
  For descriptive purposes, define $\ell_0 = 0$ and $\ell_{n+1} = \tau/2$ (although these numbers may not be glue strengths).

  We aim at designing an algorithm to find a glue function $g'$ satisfying:
  \begin{enumerate}
    \item \label{cond-1-new-strength} for any label $\sigma \in \{ \sigma_1, \ldots, \sigma_{2K} \}, g(\sigma) \geq \tau \iff g'(\sigma) \geq 2n + 2;$

    \item \label{cond-2-new-strength} for any pair of labels $\{\sigma, \sigma'\} \subset \{ \sigma_1, \ldots, \sigma_{2K} \}, g(\sigma) + g(\sigma') \geq \tau \iff g'(\sigma) + g'(\sigma') \geq 2n + 2.$
  \end{enumerate}
  Then for $\tau' = 2n + 2$, the TAS $\calT' = (T,\sigma,g',\tau')$ satisfies the 2-cooperative equivalence with $\calT$ that we seek.

  First, we define an equivalence relation $\equiv$ on $H$ defined as: for $h,h'\in H$, $h \equiv h'$ if $(\forall\ 1 \leq i \leq n) (h + \ell_i \geq \tau \iff h' + \ell_i \geq \tau).$
  This partitions $H$ into subsets $H_{n+1}, \ldots, H_2, H_1$ such that $h \in H_j$ if and only if $\tau - \ell_j \leq h < \tau = \ell_{j-1}.$
  In other words, $h \in H_j$ if $h + \ell_j \geq \tau$ implies $i \geq j$.

  A glue function $g':\Lambda(T) \to \N$ is defined as follows: for a label $\sigma \in \Lambda(T),$
  $$
    g'(\sigma) = \left\{
      \begin{array}{ll}
        0, & \hbox{if $g(\sigma)=0$;} \\
        i, & \hbox{if $g(\sigma)=\ell_i$ ($1 \leq i \leq n$);} \\
        2n+2-j, & \hbox{if $g(\sigma)\in H_j$ ($1 \leq j \leq n+1$);} \\
        2n+2, & \hbox{if $\tau \leq g(\sigma)$.}
      \end{array}
    \right.
  $$

  It is trivial that this satisfies condition \eqref{cond-1-new-strength} above.
  Let $\sigma,\sigma' \in \Lambda(T)$ with $0 < g(\sigma) \leq g(\sigma') < \tau.$
  There are two cases.
\begin{enumerate}
  \item Suppose that $\tau \leq g(\sigma) + g(\sigma')$.
  Note that $\tau/2 \leq g(\sigma')$.
If $\tau/2 \leq g(\sigma)$, then by definition $n+1 \leq g'(\sigma)$ and $n+1 \leq g'(\sigma')$, whence their sum it at least $2n+2$.
If $g(\sigma) < \tau/2$, then let $g(\sigma) = \ell_i$ for some $1 \leq i \leq n$.
Since $\tau \leq g(\sigma) + g(\sigma')$, this means that $\tau = \ell_i \leq g(\sigma')$, which in turn implies $g(\sigma') \in H_i \cup H_{i-1} \cup \ldots \cup H_1$.
By definition, $g'(\sigma) = i$ and $g'(\sigma') \geq 2n + 2 - i$.
Consequently, $2n+2 \leq g'(\sigma) + g'(\sigma')$.

  \item Suppose that $g(\sigma) + g(\sigma') < \tau$.
In this case, $g(\sigma) < \tau/2$.
If $g(\sigma') < \tau/2$, then by the definition of $g'$, both $g(\sigma)$ and $g(\sigma')$ are at most $n$ so that their sum cannot reach $2n+2$.
Otherwise, let $g(\sigma) = \ell_i$.
An argument similar to the one above gives $g(\sigma') \in H_{n+1} \cup \ldots \cup H_{i+1}$, whence $g'(\sigma') < 2n + 2 - i$.
Thus, $g'(\sigma) + g'(\sigma') < 2n + 2$.
\end{enumerate}
This verifies that $g'$ satisfies condition \eqref{cond-2-new-strength}.
}

\begin{proof}
\ProofTwoCooperative
\qedl\end{proof}

\section{Open Questions}
\label{sec-conclusion}

%
%
%
%
Our polynomial-time algorithm for finding the minimal tile system to self-assemble a square made crucial use of our polynomial-time algorithm that, given a strength-free tile system $\calT_\mathsf{sf}$, finds strengths and a temperature to implement a locally equivalent TAS $\calT$, or reports that none exists.
An open question is whether there is a polynomial-time algorithm that, given a strength-free TAS $\calT_\mathsf{sf}$, outputs a TAS \emph{of minimal temperature} that is locally equivalent to $\calT_\mathsf{sf}$, or reports that none exists.
In the proof of Theorem~\ref{thm-sf-to-standard}, we expressed the problem as an integer linear program, but needed only to find an integer feasible solution that is not necessarily optimal.
If we instead wanted to find an \emph{optimal} solution to the program, it is not clear whether the problem restricted to such instances is $\NP$-hard.

The next question is less formal.
Our results relating to 3-cooperative and 2-cooperative systems (Theorems~\ref{thm-exponential-tiles} and~\ref{thm-linear-temperature}, respectively), show that there is a difference in self-assembly ``power'' between these two classes of systems when we consider two tile systems to ``behave the same'' if and only if they are locally equivalent, which is a quite strict notion of behavioral equivalence.
For example, two tile systems could uniquely self-assemble the same shape even though they have different tile types (hence could not possibly be locally equivalent). 
It would be interesting to know to what extent 3-cooperative systems -- or high temperature tile systems in general -- are strictly more powerful than temperature 2 systems under more relaxed notions of equivalence.
For example, it is not difficult to design a shape that can be uniquely self-assembled with slightly fewer tile types at temperature 3 than at temperature 2.
How far apart can this gap be pushed?
Are there shapes with temperature-2 tile complexity that is ``significantly'' greater than their absolute (i.e., unrestricted temperature) tile complexity?



\paragraph{Acknowledgement.}
The authors thank Ehsan Chiniforooshan especially, for many fruitful and illuminating discussions that led to the results on temperature, and also Adam Marblestone and the members of Erik Winfree's group, particularly David Soloveichik, Joe Schaeffer, Damien Woods, and Erik Winfree, for insightful discussion and comments.
The second author is grateful to Aaron Meyerowitz (via the website \url{http://mathoverflow.net}) for pointing out the Dedekind numbers as a way to count the number of collections of subsets of a given set that are closed under the superset operation.

\opt{submission}{\newpage}
{ 
\bibliographystyle{plain}
\bibliography{tam}
}

    \appendix

\section{Appendix: Formal Definition of the Abstract Tile Assembly Model}
\label{sec-tam-formal}

\newcommand{\fullgridgraph}{G^\mathrm{f}}
\newcommand{\bindinggraph}{G^\mathrm{b}}

This section gives a terse definition of the abstract Tile Assembly Model (aTAM,~\cite{Winf98}). This is not a tutorial; for readers unfamiliar with the aTAM,  \cite{RotWin00} gives an excellent introduction to the model.

Fix an alphabet $\Sigma$. $\Sigma^*$ is the set of finite strings over $\Sigma$.
Given a discrete object $O$, $\langle O \rangle$ denotes a standard encoding of $O$ as an element of $\Sigma^*$.
$\Z$, $\Z^+$, $\N$, $\R^+$ denote the set of integers, positive integers, nonnegative integers, and nonnegative real numbers, respectively.
For a set $A$, $\calP(A)$ denotes the power set of $A$.
Given $A \subseteq \Z^2$, the \emph{full grid graph} of $A$ is the undirected graph $\fullgridgraph_A=(V,E)$, where $V=A$, and for all $u,v\in V$, $\{u,v\} \in E \iff \| u-v\|_2 = 1$; i.e., if and only if $u$ and $v$ are adjacent on the integer Cartesian plane.
A \emph{shape} is a set $S \subseteq \Z^2$ such that $\fullgridgraph_S$ is connected.

A \emph{tile type} is a tuple $t \in (\Sigma^*)^4$; i.e., a unit square with four sides listed in some standardized order, each side having a \emph{glue label} (a.k.a. \emph{glue}) $\ell \in \Sigma^*$.
For a set of tile types $T$, let $\Lambda(T) \subset \Sigma^*$ denote the set of all glue labels of tile types in $T$.
Let $\dall$ denote the \emph{directions} consisting of unit vectors $\{(0,1), (0,-1), (1,0), (-1,0)\}$.
Given a tile type $t$ and a direction $d \in \dall$, $t(d) \in \Lambda(T)$ denotes the glue label on $t$ in direction $d$.
We assume a finite set $T$ of tile types, but an infinite number of copies of each tile type, each copy referred to as a \emph{tile}. An \emph{assembly}
is a nonempty connected arrangement of tiles on the integer lattice $\Z^2$, i.e., a partial function $\alpha:\Z^2 \dashrightarrow T$ such that $\fullgridgraph_{\dom \alpha}$ is connected and $\dom \alpha \neq \emptyset$.
The \emph{shape of $\alpha$} is $\dom \alpha$.
Given two assemblies $\alpha,\beta:\Z^2 \dashrightarrow T$, we say $\alpha$ is a \emph{subassembly} of $\beta$, and we write $\alpha \sqsubseteq \beta$, if $\dom \alpha \subseteq \dom \beta$ and, for all points $p \in \dom \alpha$, $\alpha(p) = \beta(p)$.


A \emph{strength function} is a function $g:\Lambda(T)\to\N$ indicating, for each glue label $\ell$, the strength $g(\ell)$ with which it binds.
Let $\alpha$ be an assembly and let $p\in\dom\alpha$ and $d\in\dall$ such that $p + d \in \dom\alpha$.
Let $t=\alpha(p)$ and $t' = \alpha(p+d)$.
We say that the tiles $t$ and $t'$ at positions $p$ and $p+d$ \emph{interact} if $t(d) = t'(-d)$ and $g(t(d)) > 0$, i.e., if the glue labels on their abutting sides are equal and have positive strength.
Each assembly $\alpha$ induces a \emph{binding graph} $\bindinggraph_\alpha$, a grid graph $G=(V_\alpha,E_\alpha)$, where $V_\alpha=\dom\alpha$, and $\{p_1,p_2\} \in E_\alpha \iff \alpha(p_1) \text{ interacts with } \alpha(p_2)$.\footnote{For $\fullgridgraph_{\dom \alpha}=(V_{\dom \alpha},E_{\dom \alpha})$ and $\bindinggraph_\alpha=(V_\alpha,E_\alpha)$, $\bindinggraph_\alpha$ is a spanning subgraph of $\fullgridgraph_{\dom \alpha}$: $V_\alpha = V_{\dom \alpha}$ and $E_\alpha \subseteq E_{\dom \alpha}$.}
Given $\tau\in\Z^+$, $\alpha$ is \emph{$\tau$-stable} if every cut of $\bindinggraph_\alpha$ has weight at least $\tau$, where the weight of an edge is the strength of the glue it represents.
That is, $\alpha$ is $\tau$-stable if at least energy $\tau$ is required to separate $\alpha$ into two parts.
When $\tau$ is clear from context, we say $\alpha$ is \emph{stable}.

A \emph{tile assembly system} (TAS) is a triple $\calT = (T,\sigma,g,\tau)$, where $T$ is a finite set of tile types, $\sigma:\Z^2 \dashrightarrow T$ is the finite, $\tau$-stable \emph{seed assembly}, $g:\Lambda(T)\to\N$ is the \emph{strength function}, and $\tau\in\Z^+$ is the \emph{temperature}.
Given two $\tau$-stable assemblies $\alpha,\beta:\Z^2 \dashrightarrow T$, we write $\alpha \to_1^{\calT} \beta$ if $\alpha \sqsubseteq \beta$ and $|\dom \beta \setminus \dom \alpha| = 1$. In this case we say $\alpha$ \emph{$\calT$-produces $\beta$ in one step}.\footnote{Intuitively $\alpha \to_1^\calT \beta$ means that $\alpha$ can grow into $\beta$ by the addition of a single tile; the fact that we require both $\alpha$ and $\beta$ to be $\tau$-stable implies in particular that the new tile is able to bind to $\alpha$ with strength at least $\tau$. It is easy to check that had we instead required only $\alpha$ to be $\tau$-stable, and required that the cut of $\beta$ separating $\alpha$ from the new tile has strength at least $\tau$, then this implies that $\beta$ is also $\tau$-stable.}
If $\alpha \to_1^{\calT} \beta$, $ \dom \beta \setminus \dom \alpha=\{p\}$, and $t=\beta(p)$, we write $\beta = \alpha + (p \mapsto t)$.
The \emph{$\calT$-frontier} of $\alpha$ is the set $\partial^\calT \alpha = \bigcup_{\alpha \to_1^\calT \beta} \dom \beta \setminus \dom \alpha$, the set of empty locations at which a tile could stably attach to $\alpha$.

A sequence of $k\in\Z^+ \cup \{\infty\}$ assemblies $\vec{\alpha} = (\alpha_0,\alpha_1,\ldots)$ is a \emph{$\calT$-assembly sequence} if, for all $1 \leq i < k$, $\alpha_{i-1} \to_1^\calT \alpha_{i}$.
We write $\alpha \to^\calT \beta$, and we say $\alpha$ \emph{$\calT$-produces} $\beta$ (in 0 or more steps) if there is a $\calT$-assembly sequence $\vec{\alpha}=(\alpha_0,\alpha_1,\ldots)$ of length $k = |\dom \beta \setminus \dom \alpha| + 1$ such that
1) $\alpha = \alpha_0$,
2) $\dom \beta = \bigcup_{0 \leq i < k} \dom {\alpha_i}$, and
3) for all $0 \leq i < k$, $\alpha_{i} \sqsubseteq \beta$.
In this case, we say that $\beta$ is the \emph{result} of $\vec{\alpha}$, written $\beta=\res{\vec{\alpha}}$.
If $k$ is finite then it is routine to verify that $\res{\vec{\alpha}} = \alpha_{k-1}$.\footnote{If we had defined the relation $\to^\calT$ based on only finite assembly sequences, then $\to^\calT$ would be simply the reflexive, transitive closure $(\to_1^\calT)^*$ of $\to_1^\calT$. But this would mean that no infinite assembly could be produced from a finite assembly, even though there is a well-defined, unique ``limit assembly'' of every infinite assembly sequence.}
We say $\alpha$ is \emph{$\calT$-producible} if $\sigma \to^\calT \alpha$, and we write $\prodasm{\calT}$ to denote the set of $\calT$-producible canonical assemblies.
The relation $\to^\calT$ is a partial order on $\prodasm{\calT}$ \cite{Roth01,jSSADST}.\footnote{In fact it is a partial order on the set of $\tau$-stable assemblies, including even those that are not $\calT$-producible.}
A $\calT$-assembly sequence $\alpha_0,\alpha_1,\ldots$ is \emph{fair} if, for all $i$ and all $p\in\partial^\calT\alpha_i$, there exists $j$ such that $\alpha_j(p)$ is defined; i.e., no frontier location is ``starved''.

An assembly $\alpha$ is \emph{$\calT$-terminal} if $\alpha$ is $\tau$-stable and $\partial^\calT \alpha=\emptyset$.
It is easy to check that an assembly sequence $\vec{\alpha}$ is fair if and only $\res{\vec{\alpha}}$ is terminal.
We write $\termasm{\calT} \subseteq \prodasm{\calT}$ to denote the set of $\calT$-producible, $\calT$-terminal canonical assemblies.

A TAS $\calT$ is \emph{directed (a.k.a., deterministic, confluent)} if the poset $(\prodasm{\calT}, \to^\calT)$ is directed; i.e., if for each $\alpha,\beta \in \prodasm{\calT}$, there exists $\gamma\in\prodasm{\calT}$ such that $\alpha \to^\calT \gamma$ and $\beta \to^\calT \gamma$.\footnote{The following two convenient characterizations of ``directed'' are routine to verify.
$\calT$ is directed if and only if $|\termasm{\calT}| = 1$.
$\calT$ is \emph{not} directed if and only if there exist $\alpha,\beta\in\prodasm{\calT}$ and $p \in \dom \alpha \cap \dom \beta$ such that $\alpha(p) \neq \beta(p)$.}
We say that a TAS $\calT$ \emph{strictly self-assembles} a shape $S \subseteq \Z^2$ if, for all $\alpha \in \termasm{\calT}$, $\dom \alpha = S$; i.e., if every terminal assembly produced by $\calT$ has shape $S$.
If $\calT$ strictly self-assembles some shape $S$, we say that $\calT$ is \emph{strict}.
Note that the implication ``$\calT$ is directed $\implies$ $\calT$ is strict'' holds, but the converse does not hold.
We say that $\calT$ \emph{uniquely self-assembles} a shape $S$ if $\calT$ is directed and it strictly self-assembles $S$.

When $\calT$ is clear from context, we may omit $\calT$ from the notation above and instead write
$\to_1$,
$\to$,
$\partial \alpha$,
\emph{frontier},
\emph{assembly sequence},
\emph{produces},
\emph{producible}, and
\emph{terminal}.
We also assume without loss of generality that every single glue or double glue occurring in some tile type in some direction also occurs in some tile type in the opposite direction, i.e., there are no ``effectively null'' single or double glues.

%
%
%
%
%
%
%
%
%
%
%
%



\end{document}